\documentclass[journal,twocolumn]{IEEEtran}
\usepackage{bbm}
\usepackage{array}
\usepackage{dcolumn}
\usepackage{epsfig}
\usepackage[intlimits]{amsmath}
\usepackage{amsmath, amsfonts, yhmath, bm}
\usepackage{amssymb}
\usepackage{color,soul}
\usepackage[dvipsnames]{xcolor}
\usepackage[normalem]{ulem}
\usepackage{enumerate}
\usepackage{stackengine}
\usepackage[noadjust]{cite}
\usepackage{graphicx}
\usepackage{caption}
\usepackage{subcaption}
\usepackage[font=footnotesize]{subcaption}
\usepackage{multirow}
\usepackage{cite}
\usepackage[font=footnotesize]{caption}
\usepackage{etoolbox}
\usepackage{tcolorbox}
\usepackage{float}
\usepackage{dsfont}
\usepackage{tikz}
\usetikzlibrary{arrows}
\usepackage{comment}

\usepackage{hyperref}
\hypersetup{
    colorlinks,
    linkcolor={red!50!black},
    citecolor={blue!50!black},
    urlcolor={blue!80!black}
}

\usepackage{balance}
\usepackage{url}
\usepackage[inline]{enumitem} % For inline enumerations

\usepackage{vmr-symbols-vecbold}
\usepackage{standard-macros}
\PassOptionsToPackage{hyphens}{url}
\usepackage{hyperref}

\newcommand{\defeq}{\mathrel{\mkern-0.25mu=}}
\DeclareSymbolFontAlphabet{\amsmathbb}{AMSb}%

\newcommand{\lro}[1]{\lefto({#1}\right)}																% left right paranthesis operator
\newcommand{\lr}[1]{\left({#1}\right)}																% left right paranthesis operator
\newcommand{\lrb}[1]{\left \lbrace {#1} \right \rbrace}															% left right braces operator
\safemath{\dopplerspread}{B_D}																								% doppler spread
\safemath{\delayspread}{T_D}																									% delay spread
\safemath{\nc}{n\sub{c}}																										% coherence time
\safemath{\nf}{n\sub{f}}																										% feedback message length
\safemath{\efa}{p\sub{sc}}
\safemath{\efb}{p\sub{cs}}
\safemath{\ef}{\epsilon\sub{f}	}
\safemath{\nd}{n\sub{d}}																										% data symbols
\safemath{\ntx}{n\sub{t}} 																											% transmit antennas
\safemath{\nrx}{n\sub{r}}																											% receive antennas
\safemath{\ntxt}{\tilde{n\sub{t}}}																											% receive antennas
\safemath{\cb}{\ensuremath{L}} 																								% code blocks
\safemath{\cl}{\ensuremath{n}} 																								% codelength
\safemath{\txanto}{{\ensuremath{\tilde{m}_t}}} 																		% transmit antennas when some is turned off
\safemath{\cs}{M} 																														% code size
\safemath{\idPustm}{\ensuremath{S_{k}}}
\safemath{\error}{\ensuremath{\epsilon}} 																				%Error target
\safemath{\eexp}{\ensuremath{\mathcal{E}}} 																			%Error exponent
\safemath{\nsubc}{n\sub{s}}			 																						% number of subcarriers
\safemath{\nofdm}{n\sub{o}} 																									% number of OFDM symbols
\safemath{\bc}{\ensuremath{B_c}} 																							% Coherence bandwidth
\safemath{\ts}{\ensuremath{T_s}} 																							% Symbol time
\safemath{\nrb}{\ensuremath{n_{rb}}} 																						% Symbol time
\safemath{\rul}{\ensuremath{\rho\sub{ul}}}
\safemath{\rdl}{\ensuremath{\rho\sub{dl}}}

\safemath{\nres}{\ell}
\safemath{\nr}{n\sub{r}}
\safemath{\maxk}{M^*\lr{\nres, \nsubc, \nofdm, \epsilon, \rho}}
\safemath{\Rmax}{R^*}%\lr{\nres, \nsubc, \nofdm,M, \epsilon, \rho}}
\safemath{\Emin}{E\sub{b}^*/N_0}%\lr{\nres, \nsubc, \nofdm,M, \epsilon, \rho}}
\safemath{\Eminf}{\frac{E\sub{b}^*}{N_0}}
\safemath{\np}{\ensuremath{n\sub{p}}}
\safemath{\ndf}{\ensuremath{\bar{n}\sub{d}}}
\safemath{\npf}{\ensuremath{\bar{n}\sub{p}}}
\safemath{\code}{\ensuremath{\mathcal{C}}}
\safemath{\err}{\ensuremath{\epsilon}}
\safemath{\rp}{\ensuremath{\rho\sub{p}}}
\safemath{\rd}{\ensuremath{\rho\sub{d}}}
\safemath{\cohtime}{\ensuremath{T\sub{c}}}
\safemath{\cohbw}{\ensuremath{B\sub{c}}}
\safemath{\nmax}{\ensuremath{\ell\sub{m}}}
\safemath{\ntot}{\ensuremath{n\sub{tot}}}
\safemath{\nul}{\ensuremath{n\sub{ul}}}
\safemath{\ndl}{\ensuremath{n\sub{dl}}}

\safemath{\yp}{\ensuremath{\randvecy_{\nu}^{(\text{p})}}}
\safemath{\yd}{\ensuremath{\randvecy_{\nu}^{(\text{d})}}}
\safemath{\ypd}{\ensuremath{\vecy_{\nu}^{(\text{p})}}}
\safemath{\ydd}{\ensuremath{\vecy_{\nu}^{(\text{d})}}}

\safemath{\ypf}{\ensuremath{\bar{\randvecy}_{\nu}^{(\text{p})}}}
\safemath{\ydf}{\ensuremath{\bar{\randvecy}_{\nu}^{(\text{d})}}}
\safemath{\ypdf}{\ensuremath{\bar{\vecy}_{\nu}^{(\text{p})}}}
\safemath{\yddf}{\ensuremath{\bar{\vecy}_{\nu}^{(\text{d})}}}

\safemath{\xp}{\ensuremath{\vecx^{(\text{p})}}}
\safemath{\xd}{\ensuremath{\randvecx_{\nu}^{(\text{d})}}}
\safemath{\xdd}{\ensuremath{\vecx_{\nu}^{(\text{d})}}}

\safemath{\xpf}{\ensuremath{\bar{\vecx}^{(\text{p})}}}
\safemath{\xdf}{\ensuremath{\bar{\randvecx}_{\nu}^{(\text{d})}}}
\safemath{\xddf}{\ensuremath{\bar{\vecx}_{\nu}^{(\text{d})}}}

\safemath{\xdb}{\ensuremath{\overline{\randvecx}^{(\text{d})}}}
\safemath{\Pxd}{\ensuremath{P_{\randvecx^{(\text{d})}}}}

\safemath{\xpbar}{\ensuremath{\overline{\matX}^{(\text{p})}}}
\safemath{\xdbar}{\ensuremath{\overline{\randmatX}^{(\text{d})}}}

\safemath{\xdv}{\ensuremath{\randvecx^{(\text{d})}}}
\safemath{\xdbarv}{\ensuremath{\overline{\randvecx}^{(\text{d})}}}
\safemath{\ydv}{\ensuremath{\randvecy^{(\text{d})}}}

\safemath{\xdr}{\ensuremath{\matX^{(\text{d})}}}

\safemath{\ttx}{\ensuremath{\tau\sub{tx}}}
\safemath{\trx}{\ensuremath{\tau\sub{rx}}}
\safemath{\ack}{\ensuremath{\mathrm{s}}}
\safemath{\nack}{\ensuremath{\mathrm{c}}}

\newcommand{\prob}[1]{\ensuremath{\mathbb{P}\lro{#1}}}

\safemath{\mI}{\ensuremath{i\lro{\randvecy ; \randvecx}}} 				% i(Y;X)
\safemath{\randveca}{\bm{A}}
\safemath{\randvecb}{\bm{B}}
\safemath{\randvecc}{\bm{C}}
\safemath{\randvecd}{\bm{D}}
\safemath{\randvece}{\bm{E}}
\safemath{\randvecf}{\bm{F}}
\safemath{\randvecg}{\bm{G}}
\safemath{\randvech}{\bm{H}}
\safemath{\randveci}{\bm{I}}
\safemath{\randvecj}{\bm{J}}
\safemath{\randveck}{\bm{K}}
\safemath{\randvecl}{\bm{L}}
\safemath{\randvecm}{\bm{M}}
\safemath{\randvecn}{\bm{N}}
\safemath{\randveco}{\bm{O}}
\safemath{\randvecp}{\bm{P}}
\safemath{\randvecq}{\bm{Q}}
\safemath{\randvecr}{\bm{R}}
\safemath{\randvecs}{\bm{S}}
\safemath{\randvect}{\bm{T}}
\safemath{\randvecu}{\bm{U}}
\safemath{\randvecv}{\bm{V}}
\safemath{\randvecw}{\bm{W}}
\safemath{\randvecx}{\bm{X}}
\safemath{\randvecy}{\bm{Y}}
\safemath{\randvecz}{\bm{Z}}
\safemath{\randvecphi}{\bm{\Phi}}

\safemath{\randmatA}{\amsmathbb{A}}
\safemath{\randmatB}{\amsmathbb{B}}
\safemath{\randmatC}{\amsmathbb{C}}
\safemath{\randmatD}{\amsmathbb{D}}
\safemath{\randmatE}{\amsmathbb{E}}
\safemath{\randmatF}{\amsmathbb{F}}
\safemath{\randmatG}{\amsmathbb{G}}
\safemath{\randmatH}{\amsmathbb{H}}
\safemath{\randmatI}{\amsmathbb{I}}
\safemath{\randmatJ}{\amsmathbb{J}}
\safemath{\randmatK}{\amsmathbb{K}}
\safemath{\randmatL}{\amsmathbb{L}}
\safemath{\randmatM}{\amsmathbb{M}}
\safemath{\randmatN}{\amsmathbb{N}}
\safemath{\randmatO}{\amsmathbb{O}}
\safemath{\randmatP}{\amsmathbb{P}}
\safemath{\randmatQ}{\amsmathbb{Q}}
\safemath{\randmatR}{\amsmathbb{R}}
\safemath{\randmatS}{\amsmathbb{S}}
\safemath{\randmatT}{\amsmathbb{T}}
\safemath{\randmatU}{\amsmathbb{U}}
\safemath{\randmatV}{\amsmathbb{V}}
\safemath{\randmatW}{\amsmathbb{W}}
\safemath{\randmatX}{\amsmathbb{X}}
\safemath{\randmatY}{\amsmathbb{Y}}
\safemath{\randmatZ}{\amsmathbb{Z}}
\safemath{\randmatSigma}{\mathbb{\Sigma}}
\safemath{\randmatPhi}{\mathbb{\Phi}}
\safemath{\randmatLambda}{\mathbb{\Lambda}}

\safemath{\matSigma}{\bm{\Sigma}}
\safemath{\matPhi}{\bm{\Phi}}
\safemath{\matLambda}{\bm{\Lambda}}

\newcommand{\midsub}{\text{\tiny MID}}
\newcommand{\mie}{\text{\tiny MIE}}
\newcommand{\fa}{\text{\tiny FA}}
\newcommand{\md}{\text{\tiny MD}}

\newcommand{\GLRT}{\text{\tiny GLRT}}
\newcommand{\mmse}{\text{\tiny MMSE}}
\newcommand{\eff}{\text{\tiny eff}}

\makeatletter
\newsavebox\myboxA
\newsavebox\myboxB
\newlength\mylenA

\newcommand*\mybar[2][0.75]{%
	\sbox{\myboxA}{$\m@th#2$}%
	\setbox\myboxB\null% Phantom box
	\ht\myboxB=\ht\myboxA%
	\dp\myboxB=\dp\myboxA%
	\wd\myboxB=#1\wd\myboxA% Scale phantom
	\sbox\myboxB{$\m@th\overline{\copy\myboxB}$}%  Overlined phantom
	\setlength\mylenA{\the\wd\myboxA}%   calc width diff
	\addtolength\mylenA{-\the\wd\myboxB}%
	\ifdim\wd\myboxB<\wd\myboxA%
	\rlap{\hskip 0.5\mylenA\usebox\myboxB}{\usebox\myboxA}%
	\else
	\hskip -0.5\mylenA\rlap{\usebox\myboxA}{\hskip 0.5\mylenA\usebox\myboxB}%
	\fi}
\makeatother
\makeatletter
\renewcommand*{\defeq}{\mathrel{\rlap{%
			\raisebox{0.3ex}{$\m@th\cdot$}}%
		\raisebox{-0.3ex}{$\m@th\cdot$}}%
	=}
\makeatother

\newtheorem{proposition}{Proposition}
\begin{document}

\title{Extremum-Based Joint Compression and Detection for Distributed Sensing}

\author{Amir~Weiss$^{\star}$ and Alejandro Lancho$^{\dagger}$\\
$^{\star}$Faculty of Engineering, Bar-Ilan University, Ramat Gan, Israel\\
Signal Theory and Communications Department, Universidad Carlos III de Madrid, Leganes, Spain\\
amir.weiss@biu.ac.il, alancho@ing.uc3m.es
% \thanks{A. Weiss is with the Faculty of Engineering, Bar-Ilan University, Ramat Gan, 5290002, Israel, e-mail: amir.weiss@biu.ac.il. A. Lancho is with the Signal Theory and Communications Department, Universidad Carlos III de Madrid, Leganes, 28911, Spain, and with the Gregorio Marañón Health Research Institute, Madrid, 28007, Spain; e-mail: alancho@ing.uc3m.es.}% <-this % stops a space
% \thanks{J. Doe and J. Doe are with Anonymous University.}% <-this % stops a space
\thanks{A. Lancho is also with the Gregorio Marañón Health Research Institute, Madrid, 28007, Spain. He received funding from the Comunidad de Madrid under Grant Agreements No.~2023-T1/COM-29065 (César Nombela program) and TEC-2024/COM-89, and from the Ministerio de Ciencia, Innovación y Universidades, Spain under Grant Agreement No.~PID2023-148856OA-I00.}
}
% \thanks{Manuscript received April 19, 2005; revised August 26, 2015.}}

% note the % following the last \IEEEmembership and also \thanks - 
% these prevent an unwanted space from occurring between the last author name
% and the end of the author line. i.e., if you had this:
% 
% \author{....lastname \thanks{...} \thanks{...} }
%                     ^------------^------------^----Do not want these spaces!
%
% a space would be appended to the last name and could cause every name on that
% line to be shifted left slightly. This is one of those "LaTeX things". For
% instance, "\textbf{A} \textbf{B}" will typeset as "A B" not "AB". To get
% "AB" then you have to do: "\textbf{A}\textbf{B}"
% \thanks is no different in this regard, so shield the last } of each \thanks
% that ends a line with a % and do not let a space in before the next \thanks.
% Spaces after \IEEEmembership other than the last one are OK (and needed) as
% you are supposed to have spaces between the names. For what it is worth,
% this is a minor point as most people would not even notice if the said evil
% space somehow managed to creep in.

% The paper headers
\markboth{}{}
% The only time the second header will appear is for the odd numbered pages
% after the title page when using the twoside option.
% 
% *** Note that you probably will NOT want to include the author's ***
% *** name in the headers of peer review papers.                   ***
% You can use \ifCLASSOPTIONpeerreview for conditional compilation here if
% you desire.

% If you want to put a publisher's ID mark on the page you can do it like
% this:
%\IEEEpubid{0000--0000/00\$00.00~\copyright~2015 IEEE}
% Remember, if you use this you must call \IEEEpubidadjcol in the second
% column for its text to clear the IEEEpubid mark.

% use for special paper notices
%\IEEEspecialpapernotice{(Invited Paper)}

% make the title area
\maketitle

% As a general rule, do not put math, special symbols or citations
% in the abstract or keywords.
\begin{abstract}
We study joint compression and detection in distributed sensing systems motivated by emerging applications such as IoT-based localization. Two spatially separated sensors observe noisy signals and can exchange only a $k$-bit message over a reliable one-way low-rate link. One sensor compresses its observation into a $k$-bit description to help the other decide whether their observations share a common underlying signal or are statistically independent. We propose a simple extremum-based strategy, in which the encoder sends the index of its largest sample and the decoder performs a scalar threshold test. We derive exact nonasymptotic false-alarm and misdetection probabilities and validate the analysis with representative simulations.
\end{abstract}
\vspace{-0.2cm}

\begin{IEEEkeywords}
Distributed detection, joint compression and detection, communication-constrained sensing, extremum encoding.
\end{IEEEkeywords}

\IEEEpeerreviewmaketitle

\vspace{-0.7cm}
\section{Introduction}
% \vspace{-0.2cm}
Modern communication and sensing systems operate under growing pressure. The electromagnetic spectrum is increasingly crowded, and services compete for bandwidth under stringent demands on latency, reliability, and energy efficiency. Furthermore, when it comes to distributed sensing applications, such as large-scale IoT deployments, an additional difficulty arises: each sensor acquires only \emph{local, noisy, and incomplete measurements} of the environment. When resources are scarce, to meet reliability demands, such partial observations cannot simply be compensated for by sending more data.% or repeating transmissions.

Communication systems have traditionally been designed so that the receiver can \emph{reconstruct the entire transmitted signal}, even when only a small portion of that information is relevant for a downstream decision. An illustrative example is image transmission, where compression and channel coding are optimized to preserve every sample or pixel with high fidelity, even if the final task requires only a (possibly tiny) subset of that information. This kind of mismatch becomes particularly limiting in distributed sensing. Exchanging all locally observed data across sensors would be equivalent to transmitting every pixel of an image just to identify a single object. Under bandwidth, latency, and energy constraints, such strategies are impractical. Hence, classical centralized detectors, which rely on full access to all observations to implement procedures such as the generalized likelihood ratio test (GLRT)~\cite{kay2009fundamentals}, become infeasible in these constrained settings.

Motivated by these challenges, we study a fundamental instance of \emph{task-oriented communication} in distributed sensing, namely joint compression and detection~\cite{guleryuz2005joint,chan2005joint,viswanathan2002distributed,chamberland2007wireless}. In particular, we consider the case where two spatially separated sensors must decide whether their observations correspond to noisy, delayed versions of the same underlying signal or whether they are independent, while being restricted to a unidirectional communication budget of only $k$ bits. This abstraction captures a class of collaborative detection problems in which raw data exchange is infeasible but accurate decisions about statistical dependence or source association are still required.%and enables extremum-based strategies that mirror centralized detectors while remaining compatible with modern sensing architectures~\cite{cui2007estimation}.

Hence, the central question guiding this work is: \emph{How can a pair of spatially separated sensors cooperate to perform reliable detection under tight communication constraints?} Driven by this motivation, we develop a joint compression--detection scheme and derive \emph{exact} closed-form expressions for the false-alarm (FA) and misdetection (MD) probabilities, which constitute a detector's full statistical characterization.

% \subsection{Background and Motivation}
% Communication constraints between spatially separated sensors arise in many modern applications, including device-to-device connectivity in internet of things (IoT) networks~\cite{sarkar2014diat}, distributed radar systems~\cite{liang2011design}, and environmental monitoring. As the scale of sensing systems grows, there is a pressing need for accurate inference based on highly compressed information exchanges.

% A key challenge in this setting is collaborative detection: sensors with noisy, partial observations must decide on the presence of a target signal. Classical centralized approaches, which rely on full access to all data for optimal rules such as the likelihood ratio test (LRT)~\cite{van2004detection}, become impractical when communication is limited. This motivates developing joint compression and detection strategies that operate under strict information constraints (e.g.,~\cite{guleryuz2005joint}).

% In this work, we focus on a fundamental instance of this problem: two sensors must detect whether their observations are noisy, delayed versions of a common signal, or independent, while limited to unidirectional $k$-bit communication. We propose an extremum-based scheme that mirrors the structure of the centralized solution, making it suitable for modern sensing applications~\cite{cui2007estimation}.

\vspace{-0.25cm}
\section{Problem Formulation}\label{sec:problemformulation}
Consider two spatially separated sensors, each of which observes a signal. Sensor~1, the encoder, communicates with Sensor~2, the decoder, over a one-way noiseless link, over which it can convey a $k$-bit message. Sensor~2 is co-located with the computing unit and must decide, based on the received message and its local observation, whether the observed signals are statistically independent or noisy, shifted versions of a common source (a prerequisite for, e.g., \cite{weiss2025joint}).

Let $\{\rndx[n]\}_{n \in \integers}$ and $\{\rndy[n]\}_{n \in \integers}$ denote the observations at the encoder (Sensor~1) and the decoder (Sensor~2), respectively. We model the signals via an underlying source process $\{\rnds[n]\}_{n\in\integers}$ and noise processes $\{\rndz_1[n]\}_{n\in\integers}, \{\rndz_2[n]\}_{n\in\integers}$, where
\begin{equation}
\rnds[n] \overset{\text{iid}}{\sim} \normal(0,\sigma_{\rnds}^2), \quad
\rndz_i[n] \overset{\text{iid}}{\sim} \normal(0,\sigma_i^2), \; i \in \{1,2\},
\end{equation}
and $\{\rnds[n]\}$, $\{\rndz_1[n]\}$, and $\{\rndz_2[n]\}$ are mutually statistically independent, where $\text{iid}$ stands for independent, identically distributed. The detection task consists in deciding between the following two hypotheses:
\begin{align}
    &\setH_0: \begin{cases}
        \rndx[n] = \rndz_1[n] \\
        \rndy[n] = \rndz_2[n]
    \end{cases}\hspace{-0.3cm},\;\quad\quad\quad\quad\;\;\text{(the null hypothesis)}\label{eq:H0_model}\\[0.1cm]
    &\setH_1: \begin{cases}
        \rndx[n] = \rnds[n] + \rndz_1[n] \\
        \rndy[n] = \rnds[n-d] + \rndz_2[n]
    \end{cases}\hspace{-0.3cm},\;\text{(the alternative hypothesis)}\label{eq:H1_model}
\end{align}
where $d \in \setD \triangleq \{-d_m,\ldots,d_m\}$ is an unknown deterministic time delay, $d_m\in\positivereals$ being the (absolute) maximal delay, and $\sigma_1,\sigma_2,\sigma_{\rnds} \in \reals_{+}$ are deterministic unknowns.

By construction, it follows that the conditional cross-correlation between the two observations satisfies
\begin{align}
 \Exop\left[ \rndy[n]\rndx[n-\ell] \mid \setH \right]
 &= \begin{cases}
     0, & \hspace{-0.2cm}\setH = \setH_0, \\[0.05cm]
     R_{\rnds\rnds}[\ell-d], & \hspace{-0.2cm} \setH = \setH_1,
 \end{cases}
\end{align}
where $R_{\rnds\rnds}[\ell] \triangleq \Exop[ \rnds[n]\rnds[n-\ell] ]
    = \sigma_{\rnds}^2 \cdot \mathbbm{1}_{\{\ell=0\}}$ denotes the autocorrelation of the white process $\{\rnds[n]\}_{n\in\integers}$.

The encoder observes $\setX_N \triangleq \{\rndx[n]\}_{n=0}^{N-1}$ and transmits
\begin{equation}
    \rvecm = f(\setX_N) \in \{0,1\}^{k\times 1},
\end{equation}
 a $k$-bit message, where $f: \reals^{N\times 1} \to \{0,1\}^{k\times 1}$ denotes the (block) encoding rule. The decoder observes $\{\rndy[n]\}_{n\in\integers}$ and the received message $\rvecm$, and applies a decision rule
\begin{equation}
    \delta: \{0,1\}^{k\times 1} \times \reals^{N\times 1} \to \{\setH_0,\setH_1\},
\end{equation}
so that the final decision is $\delta(\rvecm,\setY_{N})$, where $\setY_{N}$ is a batch of $N$ samples of $\{\rndy[n]\}$, which can be chosen by the decoder. We assume equal priors $\prob{\setH=\setH_0} = \prob{\setH=\setH_1} = \tfrac{1}{2}$ and aim to minimize the overall error probability $\prob{ \delta(\rvecm, \setY_{N}) \neq \setH }$.

More specifically, for a prescribed tolerance on the FA probability—declaring $\setH_1$ when $\setH_0$ holds—we seek to minimize the MD probability—declaring $\setH_0$ when $\setH_1$ holds. Formally, for a given FA level $\alpha \in [0,1]$, we consider
\begin{align}
    \hspace{-0.2cm}&\inf_{ \substack{ f: \reals^{N\times 1} \to \{0,1\}^{k\times 1} \\ \delta: \{0,1\}^{k\times 1} \times \reals^{N\times 1} \to \{\mathcal{H}_0, \mathcal{H}_1\} } } &&\prob{ \delta = \setH_0 \mid \setH_1 } \\
    \hspace{-0.2cm}&\quad\quad\quad\quad\;\;\;\mathrm{s.t.} &&\prob{ \delta = \setH_1 \mid \setH_0 }\leq \alpha.
\end{align}
This Neyman--Pearson formulation serves as a benchmark for designing low-complexity, $k$-bit-constrained schemes.

\vspace{-0.2cm}
\section{Main Results}\label{sec:mainresults}
We now present our main results. First, we describe the proposed joint compression--detection scheme operating under the $k$-bit communication constraint, and then develop a nonasymptotic performance analysis that characterizes its behavior at finite blocklength. In particular, we derive exact expressions for the resulting FA and MD probabilities, thus providing a full and accurate characterization of our scheme.

\vspace{-0.25cm}
\subsection{Proposed Joint Compression and Detection Scheme}\label{eq:subsecproposedscheme}

Below is our proposed, particularly simple, $k$-bit scheme, referred to as the \emph{maximum-index-based detector} (MID). At a high level, the encoder conveys only the location of its largest sample, and the decoder then inspects a small neighborhood around this location in its own observation and compares the largest aligned sample to a threshold.

\paragraph*{Scheme definition}
Fix a blocklength $N = 2^k$, an admissible delay set $\setD = \{-d_m,\ldots,d_m\}$, and a threshold $\tau \in \positivereals$. The scheme operates as follows.

\begin{enumerate}
    \item \textbf{Encoder (Sensor~1):}
    The encoder observes the block $\setX_N$ and computes the index of its maximum,
    \begin{equation}
        \rndj \triangleq \arg\max_{0 \leq n \leq N-1} \rndx[n].
    \end{equation}
    Under the continuous Gaussian model, this maximum is almost surely unique; otherwise, any fixed tie-breaking rule is used. The encoder then maps $\rndj$ to its $k$-bit binary representation, and transmits $\rvecm$ to the decoder.

    \item \textbf{Decoder (Sensor~2):}
    The decoder receives $\rvecm$ and recovers the index $\rndj$. Since it observes $\{\rndy[n]\}_{n\in\integers}$, it can in particular access the samples
    \begin{equation}
        \setY_{N}^{\setD} \triangleq
        \{\rndy[n]\}_{n=-d_m}^{N-1+d_m}.
    \end{equation}
    The decoder forms the test statistic
    \begin{equation}\label{eq:testdef-mid}
        \rndt_{\midsub} \triangleq \max_{\ell \in \setD} \rndy[\rndj + \ell],
    \end{equation}
    i.e., the largest sample of the process $\rndy[n]$ in the admissible delay set
    around the reported extremum index $\rndj$.

    \item \textbf{Decision rule:}
    The final decision is obtained by a scalar threshold test:
    \begin{equation}\label{eq:miedecisionrule}
        \delta_{\midsub}(\rvecm,\setY_{N}^{\setD}) \triangleq
        \begin{cases}
            \setH_1, & \text{if } \rndt_{\midsub} \geq \tau, \\[0.05cm]
            \setH_0, & \text{if } \rndt_{\midsub} < \tau,
        \end{cases}
    \end{equation}
    or, equivalently,
    \begin{equation}\label{eq:T_MIE_repeat}
        \max_{\ell \in \setD} \rndy[\rndj + \ell]
        \;\mathop{\gtrless}_{\setH_0}^{\setH_1}\;
        \tau,
    \end{equation}
    where the threshold $\tau$ is chosen to meet a prescribed FA level, as characterized in the analysis in Section~\ref{subsec:nonasymptoticanalysis}.
\end{enumerate}

In the next subsection we develop an intuitive interpretation of this scheme by relating it to the GLRT without communication constraints: we show that the latter reduces to a threshold test on an empirical cross-correlation statistic, and that our extremum-based rule can similarly be viewed as a scaled empirical cross-correlation computed under a $k$-bit constraint. We then proceed to derive an exact nonasymptotic characterization of the resulting FA and MD probabilities.

\vspace{-0.2cm}
\subsection{The GLRT in a Constraint-Free Communication Setting and Its Relation to the Proposed Scheme}

Consider an \emph{ideal} setting, without any communication constraint, in which a fusion center has full access to both observation blocks $\setX_N$ and $\setY_N^{\setD}$. In this case, a natural benchmark is the GLRT for deciding between independence and a time-shifted common source. For a fixed time-delay $d \in \setD$, define the aligned pairs $\left\{(\rndx[n],\rndy[n+d])\right\}_{n=0}^{N-1}$.
% \begin{equation}
%     \left\{(\rndx[n],\rndy[n+d])\right\}_{n=0}^{N-1}.
% \end{equation}

From the viewpoint of the fusion center, only the joint second-order statistics (SOSs) of these pairs matter. Under $\setH_0$, they are independent zero-mean Gaussian with some (\textit{a priori} unknown) marginal variances and zero correlation; under $\setH_1$ with delay $d$, they are still zero-mean jointly Gaussian, but with a nonzero correlation. Thus, for each $d$ we can write
\begin{equation}
    \tp{[\rndx[n] \; \rndy[n+d]]} \sim \normal\big(0, \Sigma_i(d)\big),
\end{equation}
where, for $i\in\{0,1\}$,
\begin{equation}
\Sigma_0(d) =
\begin{bmatrix}
\sigma_{\rndx,0}^2 & 0 \\
0                  & \sigma_{\rndy,0}^2
\end{bmatrix},
\quad
\Sigma_1(d) =
\begin{bmatrix}
\sigma_{\rndx,1}^2 & \rho_d \sigma_{\rndx,1}\sigma_{\rndy,1} \\
\rho_d \sigma_{\rndx,1}\sigma_{\rndy,1} & \sigma_{\rndy,1}^2
\end{bmatrix},
\end{equation}
with $\sigma_{\rndx,0}^2,\sigma_{\rndy,0}^2,\sigma_{\rndx,1}^2,\sigma_{\rndy,1}^2>0$
and $\rho_d\in(-1,1)$ treated as unknown nuisance parameters. The only structural
assumptions we impose for the GLRT are that $\Sigma_0(d)$ is diagonal (independence
under $\setH_0$) and that $\Sigma_1(d)$ has a free off-diagonal entry
(correlation under $\setH_1$).

In our model \eqref{eq:H0_model}--\eqref{eq:H1_model}, the relevant SOS parameters are% determined by $(\sigma_{\rnds}^2,\sigma_1^2,\sigma_2^2)$:
\begin{equation}
\Sigma_0(d) =
\begin{bmatrix}
\sigma_1^2 & 0 \\
0          & \sigma_2^2
\end{bmatrix},\quad
\Sigma_1(d) =
\begin{bmatrix}
\sigma_{\rnds}^2 + \sigma_1^2 & \sigma_{\rnds}^2 \\
\sigma_{\rnds}^2              & \sigma_{\rnds}^2 + \sigma_2^2
\end{bmatrix},
\end{equation}
so that the (population) correlation coefficient at delay $d$ is
\begin{equation}\label{eq:correlationcoefdef}
\rho_d =
\frac{\sigma_{\rnds}^2}{
\sqrt{(\sigma_{\rnds}^2+\sigma_1^2)(\sigma_{\rnds}^2+\sigma_2^2)}}.
\end{equation}
For the GLRT derivation, however, we only use the generic form $\Sigma_0(d),\Sigma_1(d)$ and treat their entries as unknown SOS parameters consistent with the above structure. Thus, let
\begin{align}
    \widehat{\sigma}^2_{\rndx} &\triangleq \frac{1}{N}\sum_{n=0}^{N-1} \rndx^2[n], \quad
    \widehat{\sigma}^2_{\rndy}(d) \triangleq \frac{1}{N}\sum_{n=0}^{N-1} \rndy^2[n+d],\\
    \widehat{\sigma}_{\rndx\rndy}(d) &\triangleq \frac{1}{N}\sum_{n=0}^{N-1} \rndx[n]\rndy[n+d],
\end{align}
denote the empirical variances and cross-covariance at delay $d$, respectively, and define the
empirical correlation coefficient
\begin{equation}
    \widehat{\rho}(d) \triangleq
    \frac{\widehat{\sigma}_{\rndx\rndy}(d)}{\widehat{\sigma}_{\rndx}\widehat{\sigma}_{\rndy}(d)}, \quad \widehat{\sigma}_{\rndx} \triangleq \sqrt{\widehat{\sigma}^2_{\rndx}}, \quad \widehat{\sigma}_{\rndy}(d) \triangleq \sqrt{\widehat{\sigma}^2_{\rndy}(d)}.
\end{equation}

For each $d \in \setD$, define the (maximized) likelihood ratio
\begin{equation}
    \Lambda_d(\setX_N,\setY_N^{\setD})
    \triangleq
    \frac{\displaystyle \sup_{\Sigma_1(d)} 
    p(\setX_N,\setY_N^{\setD} \mid \setH_1,d,\Sigma_1(d))}
    {\displaystyle \sup_{\Sigma_0(d)} 
    p(\setX_N,\setY_N^{\setD} \mid \setH_0,\Sigma_0(d))},
\end{equation}
where the numerator optimizes over all positive-definite $\Sigma_1(d)$ of the
form above, and the denominator optimizes over diagonal matrices
$\Sigma_0(d)$. The GLRT forms the statistic
\begin{equation}
    \Lambda_{\GLRT}(\setX_N,\setY_N^{\setD})
    \triangleq \max_{d \in \setD} \Lambda_d(\setX_N,\setY_N^{\setD})
\end{equation}
and compares it to a threshold:
\begin{equation}
    \Lambda_{\GLRT}(\setX_N,\setY_N^{\setD})
    \;\mathop{\gtrless}_{\setH_0}^{\setH_1}\; \eta.
\end{equation}

For Gaussian data, the maximum-likelihood estimates of the covariance matrices in the numerator and denominator coincide, respectively, with% the empirical covariance and its diagonal part, namely
\begin{equation}
    \widehat{\Sigma}_1(d) = 
    \begin{bmatrix}
        \widehat{\sigma}^2_{\rndx} & \widehat{\sigma}_{\rndx\rndy}(d) \\
        \widehat{\sigma}_{\rndx\rndy}(d) & \widehat{\sigma}^2_{\rndy}(d)
    \end{bmatrix},
    \quad
    \widehat{\Sigma}_0(d) =
    \begin{bmatrix}
        \widehat{\sigma}^2_{\rndx} & 0 \\
        0 & \widehat{\sigma}^2_{\rndy}(d)
    \end{bmatrix}.
\end{equation}
Substituting these estimates into the likelihoods under $\setH_0$ and $\setH_1$
and simplifying shows that, for each $d$, the ratio
$\Lambda_d(\setX_N,\setY_N^{\setD})$ is a strictly increasing function of
$\widehat{\rho}^2(d)$. Consequently, the GLRT is equivalent to a threshold test
\begin{equation}
    \max_{d \in \setD} \, \widehat{\rho}(d)
    \;\mathop{\gtrless}_{\setH_0}^{\setH_1}\; \gamma,
    \label{eq:GLRT_stat}
\end{equation}
for some threshold $\gamma$ chosen to meet a desired FA level. In simple words,
\emph{in the absence of communication constraints}, the GLRT reduces to a
variance-invariant correlator: it declares $\setH_1$ whenever the maximum
normalized empirical cross-correlation between the two sensors exceeds a
threshold.

To highlight the connection with the GLRT, consider the \emph{extremum-based empirical cross-correlation} estimator~\cite{weiss2025joint}
\begin{equation}\label{eq:rho_MIE_def}
    \widehat{\rho}_{\mie}(\ell)
    \;\triangleq\;
    \frac{\rndy[\rndj + \ell]}{\Exop[\rndx[\rndj]]}, \quad \forall \ell \in \setD,
\end{equation}
where the denominator is the mean of the maximum sample $\rndx[\rndj]$.\footnote{In the Gaussian setting, this mean can be expressed in closed-form in terms of the variance of $\rndx[n]$ and the blocklength $N$; see, e.g., related extremum-based correlation estimators in~\cite{weiss2025joint} and references therein. Its exact value is not needed for the structural
interpretation that follows.} Each $\widehat{\rho}_{\mie}(\ell)$ uses precisely one pair $(\rndx[\rndj],\rndy[\rndj+\ell])$ and can thus be viewed as a single-sample empirical correlation coefficient built from the (mean of the) encoder's maximum sample and the corresponding shifted sample at the decoder.

The behavior of $\widehat{\rho}_{\mie}(\ell)$ as an estimator of the underlying correlation is characterized in closely related settings in extremum-encoding-based time-delay estimation in~\cite{weiss2024joint,weiss2025extremum,weiss2025joint}. In particular, for standard jointly Gaussian processes with correlation coefficient $\rho$ and iid samples, one has
\begin{equation}\label{eq:rho_MIE_unbiased}
    \Exop\big[\widehat{\rho}_{\mie}(\ell)\big]
    = \rho,
\end{equation}
when $\Exop \left[\rndx[n]\rndy[n+\ell]\right] = \rho$ and the variance of $\widehat{\rho}_{\mie}(\ell)$ decays on the order of $1/k$ as the message length $k$ grows.

In our model, under $\setH_0$ the observations at the two sensors are
independent, so $\Exop[\rndx[n]\rndy[n+\ell]] = 0$ for all $\ell$, and hence
\begin{equation}
    \Exop\big[\widehat{\rho}_{\mie}(\ell)\mid \setH_0\big] = 0,
    \quad \forall\,\ell \in \setD.
\end{equation}
Under $\setH_1$ with true time delay $d$, the cross-correlation between the aligned
pairs satisfies
\begin{equation}
    \Exop\left[\rndx[n]\rndy[n+\ell]\mid \setH_1\right] = \sigma_{\rnds}^2 \cdot\mathbbm{1}_{\{\ell=d\}},
\end{equation}
so that the corresponding correlation coefficient is $\rho_d$ \eqref{eq:correlationcoefdef} for $\ell=d$ and $0$ for $\ell\neq d$. Hence,
\begin{equation}
    \Exop\big[\widehat{\rho}_{\mie}(\ell)\mid \setH_1\big]
    =
    \begin{cases}
        \rho_d, & \ell = d,\\[0.05cm]
        0,      & \ell \neq d.
    \end{cases}
\end{equation}

Since $\Exop[\rndx[\rndj]]$ does not depend on $\ell$, the test based on
$\rndt_{\midsub}$ can be equivalently expressed as a threshold test on the maximum
of these correlation estimates. Indeed, from~\eqref{eq:T_MIE_repeat} and
\eqref{eq:rho_MIE_def},
\begin{equation}
    \max_{\ell \in \setD} \rndy[\rndj + \ell]
    = \Exop[\rndx[\rndj]] \cdot \max_{\ell \in \setD} \widehat{\rho}_{\mie}(\ell),
\end{equation}
so that
\begin{equation}
    \rndt_{\midsub} \;\mathop{\gtrless}_{\setH_0}^{\setH_1}\; \tau
    \quad\Longleftrightarrow\quad
    \max_{\ell \in \setD} \widehat{\rho}_{\mie}(\ell)
    \;\mathop{\gtrless}_{\setH_0}^{\setH_1}\; \gamma,
\end{equation}
with $\gamma \triangleq \tau / \Exop[\rndx[\rndj]]$. Thus, \emph{our detector \eqref{eq:miedecisionrule} is also a correlation test}: it compares the maximum
over $\ell$ of a suitably scaled empirical cross-correlation coefficient to a
threshold.

Comparing this with the GLRT statistic~\eqref{eq:GLRT_stat}, we see that, in the absence of communication constraints, the GLRT uses all $N$ aligned samples to form normalized empirical cross-correlations $\widehat{\rho}(d)$ and thresholds $\max_{d\in\setD}\widehat{\rho}(d)$, whereas our scheme compresses the observation of the encoder into the single extremum index $\rndj$ and then forms one-sample (extremum-based) empirical correlations $\widehat{\rho}_{\mie}(\ell)$ at the decoder, thresholding $\max_{\ell\in\setD}\widehat{\rho}_{\mie}(\ell)$. The two rules thus share the same core structure, differing only in how the empirical cross-correlation is estimated under the $k$-bit communication constraint.

% \vspace{-0.4cm}
\subsection{Nonasymptotic Performance Analysis}\label{subsec:nonasymptoticanalysis}

We now characterize the performance of the extremum-based detector. Specifically, for a fixed threshold~$\tau$, we derive nonasymptotic closed-form expressions for the resulting FA and MD probabilities under the statistical model \eqref{eq:H0_model}--\eqref{eq:H1_model} presented in Section~\ref{sec:problemformulation}. Throughout this subsection, we consider the decision rule is given by~\eqref{eq:miedecisionrule}.

\subsubsection*{False-alarm probability}

Under $\setH_0$, the processes at the two sensors are independent, and the samples $\{\rndy[\rndj+\ell]\}_{\ell\in\setD}$ used by the detector are iid Gaussian with variance~$\sigma_2^2$. The FA event is therefore the event that at least one of these $2d_m+1$ samples exceeds the threshold~$\tau$. This yields the following result.

\begin{proposition}[FA probability]
For the decision rule~\eqref{eq:miedecisionrule}, the FA probability $P_{\fa}(\tau) \triangleq \prob{ \delta(\rvecm, \setY_N) = \setH_1 \mid \setH_0}$ is given by
\begin{align}
        P_{\fa}(\tau) &= 1 - \left(1-Q\left(\frac{\tau}{\sigma_2}\right)\right)^{2d_m+1},\label{eq:fa_closed}
\end{align}
where $Q(x)\triangleq \frac{1}{\sqrt{2\pi}}\int_{x}^{\infty}e^{-\frac{t^2}{2}}{\rm d}t$
is the standard $Q$-function.
\end{proposition}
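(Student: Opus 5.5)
We prove \eqref{eq:fa_closed} by conditioning on the transmitted index $\rndj$ and using the independence of the two sensors' observations under $\setH_0$. By \eqref{eq:H0_model}, $\rndj$ is a measurable function of $\setX_N=\{\rndz_1[n]\}_{n=0}^{N-1}$ alone. The samples $\{\rndy[n]\}=\{\rndz_2[n]\}$ are iid $\normal(0,\sigma_2^2)$ and independent of $\{\rndz_1[n]\}$. Hence, conditionally on $\rndj=j$ for any $j\in\{0,\ldots,N-1\}$, the vector $(\rndy[j+\ell])_{\ell\in\setD}$ has the same law as its unconditional one. Because $\setD$ has $2d_m+1$ distinct integers, the indices $j+\ell$ are distinct, so this vector consists of $2d_m+1$ iid $\normal(0,\sigma_2^2)$ variables. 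The corresponding indices all lie in $\{-d_m,\ldots,N-1+d_m\}$, so these samples are indeed available in $\setY_N^{\setD}$.

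Next we rewrite the FA event as a complement. Under the decision rule~\eqref{eq:miedecisionrule}, $\setH_1$ is declared iff $\rndt_{\midsub}\ge\tau$. The complementary event $\{\max_{\ell\in\setD}\rndy[j+\ell]<\tau\}$ equals $\bigcap_{\ell\in\setD}\{\rndy[j+\ell]<\tau\}$. By conditional independence, its probability is
\begin{equation*}
\prod_{\ell\in\setD}\prob{\rndz_2[j+\ell]<\tau}=\left(1-Q\left(\frac{\tau}{\sigma_2}\right)\right)^{2d_m+1},
\end{equation*}
using $\prob{\rndz_2<\tau}=1-Q(\tau/\sigma_2)$ for $\rndz_2\sim\normal(0,\sigma_2^2)$. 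Since the tie event $\{\rndt_{\midsub}=\tau\}$ has probability zero, the choice of $\ge$ versus $>$ does not matter.

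Finally, this conditional probability does not depend on $j$. Averaging over the law of $\rndj$, by the law of total probability, therefore gives the same value, and subtracting from one yields \eqref{eq:fa_closed}.

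The only delicate step is justifying that conditioning on $\rndj$, which is a data-dependent index, does not bias the distribution of the $\rndy$ samples. This follows from the independence of $\rndj$ and $\{\rndz_2[n]\}$ under $\setH_0$. Any deterministic tie-breaking rule is also a function of $\setX_N$ only, so it does not affect this independence.
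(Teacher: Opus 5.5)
Your proof is correct and follows essentially the same route as the paper: pass to the complement event $\{\max_{\ell\in\setD}\rndy[\rndj+\ell]<\tau\}$ and factor it into $2d_m+1$ iid Gaussian CDF terms using $\rndy[n]=\rndz_2[n]$ under $\setH_0$. The only difference is that you make explicit a step the paper takes implicitly when it writes $\mathbb{P}(\rndz_2[n]<\tau)^{|\setD|}$: you condition on $\rndj$, note that $\rndj$ is independent of $\{\rndz_2[n]\}$, and observe that the indices $\rndj+\ell$ are distinct and that ties at $\tau$ have probability zero.
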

\begin{IEEEproof}
We have
\begin{align}
        % P_{\fa} &\triangleq \prob{ \delta(\rvecm, \setY_N) = \setH_1 \mid \setH_0}\\
        P_{\fa}(\tau) &= \prob{\rndt_{\midsub} \geq \tau \mid \setH_0}\\
        &= \prob{\max_{\ell \in \setD} \; \rndy[\rndj + \ell] \geq \tau \mid \setH_0}\\
        % &= \prob{\max_{\ell \in \setD} \; \rndz_2[\rndj + \ell] \geq \tau}\\
        &= 1 - \prob{\max_{\ell \in \setD} \; \rndz_2[\rndj + \ell] \leq \tau}\\
        &= 1 - \prob{\rndz_2[n] < \tau}^{|\setD|}\\
        &= 1 - \left(1-Q\left(\frac{\tau}{\sigma_2}\right)\right)^{2d_m+1},
\end{align}
where we have used the fact that, under $\setH_0$, $\rndy[n]=\rndz_2[n]$.
\end{IEEEproof}

\subsubsection*{Misdetection probability}

Under $\setH_1$, the analysis is more involved due to the statistical dependence
between the transmitted extremum index~$\rndj$ and the samples of $\rndy[\cdot]$
in its neighborhood, as well as due to the composite nature of the alternative
hypothesis (unknown delay~$d$). Still, we may derive an exact expression for the MD probability, as follows.

\begin{proposition}[MD probability]
Let
    \begin{align}
    \sigma_{\rndx}^2 &\triangleq \sigma_{\rnds}^2+\sigma_1^2, \quad\;\;\;\, \beta \triangleq \frac{\sigma_{\rnds}^2}{\sigma_{\rndx}^2},\\
    \sigma_{\mmse}^2 &\triangleq \sigma_2^2 + \beta\sigma_1^2, \;\;\; \sigma_{\eff}^2 \triangleq \beta^2\sigma_{\rndx}^2+\sigma_{\mmse}^2,\\
    M_{\mathrm{in}}(j) &\triangleq \min(d_m,j) + \min(d_m,N-1-j),\label{eq:numberofinside}\\
    M_{\mathrm{out}}(j) &\triangleq 2d_m - M_{\mathrm{in}}(j),\label{eq:numberofoutside}
    \end{align}
and define $\Phi(\cdot)$ as the standard Gaussian cumulative distribution function (CDF),
$\Phi_2(\cdot,\cdot;\rho)$ as the CDF of a zero-mean bivariate Gaussian with
unit variances and correlation coefficient~$\rho$, and
$\Phi^{-1}(\cdot)$ as the inverse of~$\Phi(\cdot)$. Then, for the decision
rule~\eqref{eq:miedecisionrule}, the MD probability $P_{\md}(\tau) \triangleq \prob{ \delta(\rvecm, \setY_N) = \setH_0 \mid \setH_1}$ is given by
\begin{align}
    P_{\md}(\tau) &=\int_{0}^{1} N u^{N-1}
    \left\{\frac{1}{N}\sum_{j=0}^{N-1}
    Q\left(\tfrac{\beta\sigma_{\rndx}\Phi^{-1}(u)-\tau}{\sigma_{\mmse}}\right) \right. \nonumber\\
    &\times\left.\left[\tfrac{\Phi_2\left(\Phi^{-1}(u),\,\tfrac{\tau}{\sigma_{\eff}};\, \rho\right)}{u}\right]^{M_{\mathrm{in}}(j)}\left[\Phi\left(\tfrac{\tau}{\sigma_{\eff}}\right)\right]^{M_{\mathrm{out}}(j)}
    \right\} {\rm d}u.\label{eq:p_md_exact}
\end{align}
\end{proposition}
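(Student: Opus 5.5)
The plan is to condition on the encoder's side: first on the reported index $\rndj=j$, then on the value of the maximum $\rndx[j]=x$. Given both, the $2d_m+1$ decoder samples $\{\rndy[j+\ell]\}_{\ell\in\setD}$ should be conditionally independent, so $P_{\md}(\tau)=\prob{\max_{\ell}\rndy[\rndj+\ell]<\tau\mid\setH_1}$ factors into a product of one-sample probabilities. The formula~\eqref{eq:p_md_exact} is then recovered by averaging over $(\rndj,\rndx[\rndj])$.

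\textbf{Step 1: distribution of the encoder's maximum.} Since $\rndx[n]\overset{\text{iid}}{\sim}\normal(0,\sigma_{\rndx}^2)$, by exchangeability $\rndj$ is uniform on $\{0,\dots,N-1\}$ and independent of the value $\rndx[\rndj]$. Substitute $u=\Phi(x/\sigma_{\rndx})$. Then $U=\Phi(\rndx[\rndj]/\sigma_{\rndx})$ is the maximum of $N$ iid uniforms and has density $Nu^{N-1}$ on $[0,1]$. This accounts for the outer integral and the factor $\frac1N\sum_j$.

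Given $\rndj=j$ and $\rndx[j]=x$, the remaining $\rndx[m]$, $m\neq j$, are iid $\normal(0,\sigma_{\rndx}^2)$ truncated to $\{\rndx[m]<x\}$. Moreover, each pair $(\rndx[m],\rnds[m])$ keeps its joint law conditioned on $\rndx[m]<x$. Samples $\rnds[m]$ and $\rndz_2[\cdot]$ with indices outside the block are untouched by the conditioning.

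\textbf{Step 2: classify the $2d_m+1$ decoder samples.} Under $\setH_1$, write $\rndy[j+\ell]=\rnds[j+\ell-d]+\rndz_2[j+\ell]$. The source indices $j+\ell-d$ are distinct across $\ell$, and so are the noise indices, which gives conditional independence. There are three types of sample.
\begin{enumerate}
\item \emph{Aligned sample} ($\ell=d$). Here $\rndy[j+d]=\rnds[j]+\rndz_2[j+d]$. Gaussian MMSE gives $\rnds[j]\mid\rndx[j]=x\sim\normal(\beta x,\beta\sigma_1^2)$. Hence $\rndy[j+d]\mid x\sim\normal(\beta x,\sigma_{\mmse}^2)$, and $\prob{\rndy[j+d]<\tau\mid x}=Q\big(\frac{\beta x-\tau}{\sigma_{\mmse}}\big)$ with $x=\sigma_{\rndx}\Phi^{-1}(u)$.
\item \emph{Inside samples}: $m=j+\ell-d\in\{0,\dots,N-1\}$ with $m\neq j$. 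Here $(\rndx[m],\rndy[j+\ell])$ is jointly Gaussian with variances $\sigma_{\rndx}^2$ and $\sigma_{\rnds}^2+\sigma_2^2=\sigma_{\eff}^2$. I will verify $\beta^2\sigma_{\rndx}^2+\sigma_{\mmse}^2=\sigma_{\rnds}^2+\sigma_2^2$. The correlation is $\rho=\sigma_{\rnds}^2/(\sigma_{\rndx}\sigma_{\eff})$. Conditioning on $\rndx[m]<x$ gives $\Phi_2(\Phi^{-1}(u),\tau/\sigma_{\eff};\rho)/u$.
\item \emph{Outside samples}: $m$ outside the block. These are unaffected by the conditioning and contribute $\Phi(\tau/\sigma_{\eff})$ each.
\end{enumerate}

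\textbf{Step 3: count and assemble.} Count how many of the $2d_m$ non-aligned shifts land inside the block; this number is $M_{\mathrm{in}}(j)$ in~\eqref{eq:numberofinside}, and the rest is $M_{\mathrm{out}}(j)$. Multiplying the three types of factor, averaging over $j$ and integrating against $Nu^{N-1}$ gives~\eqref{eq:p_md_exact}.

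\textbf{Expected main obstacle.} The delicate step is the bookkeeping in the counting. The window of source indices is $\{j-d_m-d,\dots,j+d_m-d\}\setminus\{j\}$, and its intersection with $\{0,\dots,N-1\}$ equals $\min(d_m,j)+\min(d_m,N-1-j)$ exactly when the window is centered at $j$, i.e.\ for $d=0$. For general $d$ the count is $\min(d_m+d,j)+\min(d_m-d,N-1-j)$, which differs from~\eqref{eq:numberofinside} near the block edges. I would first check whether the paper's convention intends $d=0$ (or uses the expression as a stand-in for this $d$-dependent count), and state the count accordingly.

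A second point needs rigor: that conditioning on the argmax leaves the non-maximal pairs $(\rndx[m],\rnds[m])$ iid with law conditioned on $\{\rndx[m]<x\}$. I would prove this by writing the joint density of the block and integrating over the event $\{\rndj=j\}$.
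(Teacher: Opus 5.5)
Your proposal is correct and follows the paper's proof essentially step by step: condition on $(\rndj,\rndx[\rndj])$, factor the non-detection probability into one aligned, $M_{\mathrm{in}}$ inside and $M_{\mathrm{out}}$ outside terms, and average against the density $Nu^{N-1}$; the only cosmetic difference is that you obtain the inside factor directly from the bivariate Gaussian law of $(\rndx[m],\rndy[j+\ell])$, whereas the paper writes $\rndy[n]=\beta\rndx[n-d]+\text{independent noise}$ and averages a $Q$-function over the truncated law (the same computation). Your counting concern is well founded: the paper's proof tacitly takes $\ell-d\in\setD\setminus\{0\}$, which is true only for $d=0$, and for $d\neq 0$ (with $N>2d_m$) your count $\min(d_m+d,j)+\min(d_m-d,N-1-j)$ sums over $j$ to exactly $d^2$ less than the stated $M_{\mathrm{in}}(j)$; since the inside and outside factors differ for $\rho>0$, the displayed formula is exact only at $d=0$, and you should state and prove the $d$-dependent version.
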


\begin{IEEEproof}
Using the linear minimum mean-square error decomposition of $\rnds[n]$ based on $\rndx[n]$, we may write
\begin{equation}
\rndy[n] = \beta\rndx[n-d] + \mybar{\rndz}[n],\label{eq:mmse-dec}
\end{equation}
with $\mybar{\rndz}[n]\overset{\text{iid}}{\sim} \normal(0,\sigma_{\mmse}^2)$ independent of $\{\rndx[n]\}_{n\in\integers}$. By symmetry, $\rndj$ is uniform on $\{0,\ldots,N-1\}$ and independent of $\rndx[\rndj]$. Moreover, given $\{\rndj=j,\rndx[\rndj]=x\}$, the $N-1$ remaining samples are iid as $\rndv\mid \rndv\le x$ with $\rndv\sim\normal(0,\sigma_{\rndx}^2)$. Therefore,
\begin{align}
        &P_{\md}=\Exop\left[ \prob{\max_{\ell \in \setD} \; \rndy[\rndj + \ell] < \tau \mid \setX_N, \setH_1} \mid \setH_1 \right]\\
        &=\Exop\left[ \prod_{\ell\in\setD}\prob{\beta\rndx[\rndj + \ell - d] - \tau < - \mybar{\rndz}[\rndj + \ell] \mid \setX_N, \setH_1} \mid \setH_1 \right]\\
        &=\Exop\left[\underbrace{Q\left( \tfrac{ \beta\rndx[\rndj] - \tau }{ \sigma_{\mmse} } \right)}_{\text{``aligned"}}
\cdot
\prod_{\substack{\ell\in\setD \\ \ell\neq d}}
\underbrace{Q\left( \tfrac{ \beta\rndx[\rndj+\ell-d] - \tau }{ \sigma_{\mmse} } \right)}_{\text{``non-aligned"}} \mid \setH_1
\right].\label{eq:expectationofalignedandnonaligned}
\end{align}

Now, consider the ``inside/outside" structure for the non-aligned lags. Fix $j\in\{0,\ldots,N-1\}$ and write $r\triangleq \ell-d\in\setD\backslash\{0\}$. Then $\rndj+r$ either falls \emph{inside} the encoded block $\{0,\ldots,N-1\}$, or \emph{outside}:

\begin{itemize}
    \item \emph{Inside lags ($\rndj+r\in\{0,\ldots,N-1\}$, $r\neq 0$)}: Conditioned on $\{\rndj=j, \rndx[\rndj]=x\}$, the $M_{\mathrm{in}}(j)$ values $\{\rndx[\rndj+r]\}_{\text{inside}}$ are iid as $\rndv\mid \rndv\le x$ with $\rndv\sim\normal(0,\sigma_{\rndx}^2)$. Using \eqref{eq:mmse-dec}, for a generic inside lag,
\begin{align}
&\Exop\left[ Q\left(\frac{\beta\rndx[\rndj+r]-\tau}{\sigma_{\mmse}}\right) \mid \rndx[\rndj]=x, \rndj=j, \setH_1 \right]\\
&= \Exop\left[ Q\left(\frac{\beta\rndv-\tau}{\sigma_{\mmse}}\right) \mid \rndv\le x, \setH_1 \right]
\triangleq B_{\mathrm{in}}(x,\tau).
\end{align}
Since $\{\rndx[\rndj+r]\}_{\text{inside}}$ are iid given $x$, the $M_{\mathrm{in}}(j)$ (defined in \eqref{eq:numberofinside}) inside factors multiply to $B_{\mathrm{in}}(x,\tau)^{M_{\mathrm{in}}(j)}$.

    \item \emph{Outside lags ($\rndj+r\notin\{0,\ldots,N-1\}$)}: These $\rndx[\rndj+r]$ are independent of $\mathcal{X}_N$---and therefore of $\rndj$ and $\rndx[\rndj]$---with law $\normal(0,\sigma_{\rndx}^2)$. Hence, for a generic outside lag,
\begin{align}
&\Exop\left[ Q\left(\frac{\beta\rndx[\rndj+r]-\tau}{\sigma_{\mmse}}\right) \mid \rndx[\rndj]=x, \rndj=j, \setH_1 \right]\\
&= \Exop\left[ Q\left(\frac{\beta\rndv-\tau}{\sigma_{\mmse}}\right) \right] \triangleq B_{\mathrm{out}}(\tau),
\end{align}
so the $M_{\mathrm{out}}(j)$ \eqref{eq:numberofoutside} factors multiply to $B_{\mathrm{out}}(\tau)^{M_{\mathrm{out}}(j)}$.

\end{itemize}

% We now turn to compute the closed-form expressions for inside and outside building blocks, $B_{\mathrm{in}}(x,\tau)$ and $B_{\mathrm{out}}(\tau)$, respectively.
We now turn to compute $B_{\mathrm{in}}(x,\tau)$ and $B_{\mathrm{out}}(\tau)$. Let $\rndv\sim\normal(0,\sigma_{\rndx}^2)$ and $\rndg\sim\normal(0,1)$ independent, and define the utility random variable $\rndu\triangleq \beta\,\rndv-\sigma_{\mmse}\cdot\rndg\sim\normal(0,\sigma_{\eff}^2)$. Then,
\begin{align}
B_{\mathrm{out}}(\tau) &= \Exop\left[ \prob{\rndu<\tau \mid \rndv} \right] = \prob{\rndu<\tau} = \Phi\left(\frac{\tau}{\sigma_{\eff}}\right). \label{eq:Bout}
\end{align}
For the inside factor, by Bayes’ rule and standardization,
\begin{align}
% B_{\mathrm{in}}(x,\tau) &= \prob{\rndu<\tau \mid \rndv\le x} \\
% &= \frac{\prob{\rndu<\tau, \rndv\le x}}{\prob{\rndv\le x}} = \frac{\Phi_2\left(\frac{x}{\sigma_{\rndx}}, \frac{\tau}{\sigma_{\eff}}; \rho\right)}{\Phi\left(\frac{x}{\sigma_{\rndx}}\right)}, \label{eq:Bin}
B_{\mathrm{in}}(x,\tau) &= \frac{\prob{\rndu<\tau, \rndv\le x}}{\prob{\rndv\le x}} = \frac{\Phi_2\left(\frac{x}{\sigma_{\rndx}}, \frac{\tau}{\sigma_{\eff}}; \rho\right)}{\Phi\left(\frac{x}{\sigma_{\rndx}}\right)}, \label{eq:Bin}
\end{align}
where $\rho \triangleq \frac{\beta\sigma_{\rndx}}{\sigma_{\eff}}$ is the correlation coefficient between $\rndv$ and $\rndu$, and $\Phi_2(t_1,t_2;\rho) \triangleq \int_{-\infty}^{t_1} \phi(v)\Phi\left(\tfrac{t_2-\rho v}{\sqrt{1-\rho^2}}\right){\rm d}v$ is the standard bivariate normal CDF with correlation $\rho$.

Having established the conditional expectations given $\{\rndj=j, \rndx[\rndj]=x\}$, we are left with the task of averaging over the maximum and the extremum location. Using the standard notation $\phi(x)\triangleq \frac{1}{\sqrt{2\pi}}e^{-\frac{x^2}{2}}$, the probability density function (PDF) of the block maximum $\rndx[\rndj]$ is given by
\begin{equation}
p_{\rndx[\rndj]}(x) = \tfrac{N}{\sigma_{\rndx}}\phi\left(\tfrac{x}{\sigma_{\rndx}}\right)\Phi\left(\tfrac{x}{\sigma_{\rndx}}\right)^{N-1}, \quad x\in\reals,
\label{eq:pdf-max}
\end{equation}
and $\prob{\rndj=j}=\tfrac{1}{N}$ for all $j$. Plugging the ``aligned" and ``non-aligned" factors into \eqref{eq:expectationofalignedandnonaligned} and averaging yields
\begin{align}
\hspace{-0.2cm}P_{\md}(\tau)
= \frac{1}{N}\sum_{j=0}^{N-1} \int_{-\infty}^{\infty}
&\underbrace{Q\!\left(\frac{\beta\,x-\tau}{\sigma_{\mmse}}\right)}_{\text{aligned}}\cdot
\underbrace{\big[B_{\mathrm{in}}(x,\tau)\big]^{M_{\mathrm{in}}(j)}}_{\text{inside lags}} \nonumber \\
&\times
\underbrace{\big[B_{\mathrm{out}}(\tau)\big]^{M_{\mathrm{out}}(j)}}_{\text{outside lags}}
\, p_{\rndx[\rndj]}(x)\, {\rm d}x.
\label{eq:Pmd-exact}
\end{align}
Using the change of variables $u=\Phi(x/\sigma_{\rndx})$, hence $x=\sigma_{\rndx}\,\Phi^{-1}(u)$ and $p_{\rndx[\rndj]}(x)\,{\rm d}x=N\,u^{N-1}{\rm d}u$, gives \eqref{eq:p_md_exact}.
\end{IEEEproof}

\vspace{-0.25cm}
\section{Simulation Results}
\begin{figure}[t]
        \centering
        \includegraphics[width=0.93\columnwidth]{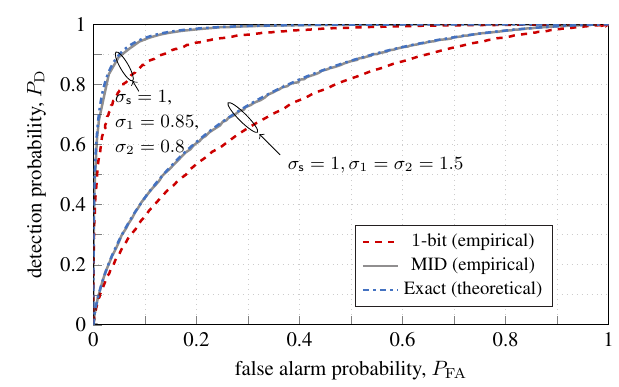}
        \caption{ROC curve for $k=8$ and $d_m=50$.}
        \label{fig:roc_k8}\vspace{-0.4cm}
    \end{figure}
\begin{figure}[t]
        \centering
        \includegraphics[width=0.93\columnwidth]{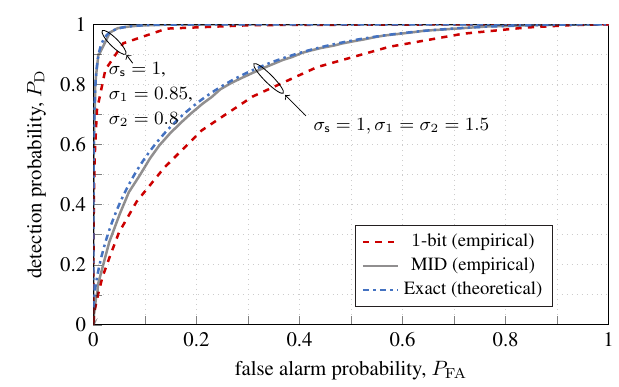}
        \caption{ROC curve for $k=12$ and $d_m=150$.}
        \label{fig:roc_k12}\vspace{-0.5cm}
    \end{figure}
\begin{figure}[t]
        \centering
        \includegraphics[width=0.93\columnwidth]{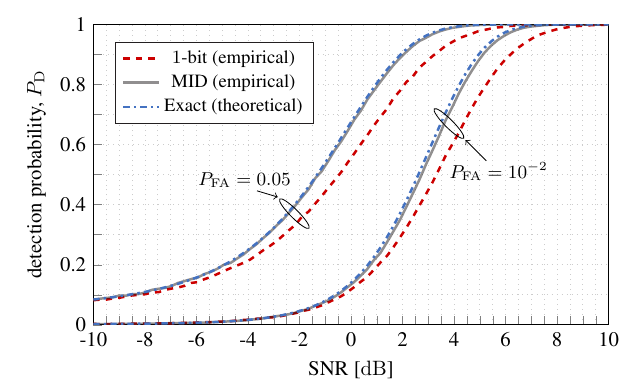}
        \caption{Probability of detection $P_{\mathrm{D}}$ as a function of the SNR in $\mathrm{dB}$ for fixed $P_{\mathrm{FA}}$, $\sigma_1=\sigma_2=1$, $k=8$ and $d_m=50$.}
        \label{fig:snr_pd_fixed_pfa}\vspace{-0.4cm}
    \end{figure}
We compare the empirical performance of the MID \eqref{eq:miedecisionrule} with its analytically predicted $P_{\fa}$ in~\eqref{eq:fa_closed} and $P_{\md}$ in~\eqref{eq:p_md_exact}, as well as with a $1$-bit-per–sample-based (e.g.,~\cite{weiss2021onebit}) quantization detector. In all simulations, we choose the admissible maximum delay $d_m=\{50,150\}$ for $k=\{8,12\}$, respectively, fix the true delay to $d=32$ and set $N=2^k$. The receiver operating characteristic (ROC) curves for $k=\{8,12\}$ (Figs.~\ref{fig:roc_k8} and~\ref{fig:roc_k12}) show an effectively perfect match between the theoretical and empirical MID performance across all FA levels. Additionally, the MID consistently outperforms the $1$-bit-quantization-based detector, especially in the moderate-FA regime. Finally, SNR sweeps at fixed $P_{\fa}\in\{0.05,10^{-2}\}$ (Fig.~\ref{fig:snr_pd_fixed_pfa}), where  $\sigma_1=\sigma_2=1$ so that SNR$=\sigma_{\rnds}^2$, confirm that the empirical results track the theoretical prediction closely, and the $1$-bit scheme continues to incur a consistent SNR penalty, highlighting the advantages of our proposed scheme.

\vspace{-0.2cm}
\section{Concluding Remarks}\label{sec:conclusion}
We studied joint compression and detection in a distributed sensing setting with a limited $k$-bit link from a distant sensor to another, co-located with the computing unit. We proposed a maximum-index-based detector, showed that it can be interpreted as a correlation test closely related to the centralized GLRT, and derived exact nonasymptotic expressions for its FA and MD probabilities. Simulation results corroborate these analytical performance curves, and further indicate that the extremum-based scheme provides competitive detection performance despite under tight communication constraint. These findings highlight extremum encoding as a promising primitive for distributed inference under stringent rate limitations.

% \begin{figure}
%     \centering
%     \includesvg[width=0.78\columnwidth]{figures/ROC_figure_for_EA_method_SNR_10_dB.svg}
%     \caption{ROC curves of the proposed method and an alternative competitor method, which applies one-bit-per-sample ($1$-bit) quantization (see, e.g.,~\cite{weiss2024joint}), for an SNR level of $10$dB, where $P_{\detect}\triangleq1-P_{\md}$. Evidently, our method considerably outperforms the common, though na\"ive, method of $1$-bit compression.}
%     \label{fig:simulationresults}
% \end{figure}

\vspace{-0.2cm}

\bibliographystyle{IEEEbib}
\bibliography{./Inputs/refs}

\end{document}